\newcommand{\inSAT}{\textsc{MAX-1-in-3-SAT}} 
\newcommand{\E}{\mathbf{E}}
\begin{document}

\title{Improved Inapproximability for TSP}
\author{Michael Lampis \thanks{Research supported by ERC Grant 226203}}
\institute{KTH Royal Institue of Technology \\ \email{mlampis@kth.se}}

\maketitle

\begin{abstract}

The Traveling Salesman Problem is one of the most studied problems in
computational complexity and its approximability has been a long standing open
question.  Currently, the best known inapproximability threshold known is
$\frac{220}{219}$ due to Papadimitriou and Vempala. Here, using an essentially
different construction and also relying on the work of Berman and Karpinski on
bounded occurrence CSPs, we give an alternative and simpler inapproximability
proof which improves the bound to $\frac{185}{184}$.

 \end{abstract}

\section{Introduction}

The Traveling Salesman Problem (TSP) is one of the most widely studied
algorithmic problems and deriving optimal approximability results for it has
been a long-standing question. Recently, there has been much progress in the
algorithmic front, after more than thirty years, at least in the important
special case where the instance metric is derived from an unweighted graph,
often referred to as Graphic TSP.  The $\frac{3}{2}$-approximation algorithm by
Christofides was the best known until Gharan et al. gave a slight improvement
\cite{GharanSS11} for Graphic TSP. Then an algorithm with approximation ratio
$1.461$ was given by M{\"o}mke and Svensson \cite{MomkeS11}. With improved
analysis on their algorithm Mucha obtained a ratio of $\frac{13}{9}$
\cite{Mucha12}, while the best currently known algorithm has ratio $1.4$ and is
due to Seb{\"o} and Vygen \cite{Vygen}.

Nevertheless, there is still a huge gap between the guarantee of the best
approximation algorithms we know and the best inapproximability results. The
TSP was first shown MAXSNP-hard in \cite{pap93}, where no explicit
inapproximability constant was derived. The work of Engerbretsen
\cite{Engebretsen03} and B{\"o}ckenhauer et al. \cite{BockenhauerHKSU00} gave
inapproximability thresholds of $\frac{5381}{5380}$ and $\frac{3813}{3812}$
respectively. Later, this was improved to $\frac{220}{219}$ in \cite{pap06} by
Papadimitriou and Vempala\footnote{The reduction of \cite{pap06} was first
presented in \cite{pap00}, which (erroneously) claimed a better bound.}.  No
further progress has been made on the inapproximability threshold of this
problem in the more than ten years since \cite{pap00}. 

\textbf{Overview}: Our main objective in this paper is to give a different,
less complicated inapproximability proof for TSP than the one given in
\cite{pap00,pap06}.  The proof of \cite{pap06} is very much optimized to
achieve a good constant: the authors reduce directly from MAX-E3-LIN2, a
constraint satisfaction problem (CSP) for which optimal inapproximability
results are known, due to Håstad \cite{hastad}. They take care to avoid
introducing extra gadgets for the variables, using only gadgets that encode the
equations.  Finally they define their own custom expander-like notion on graphs
to ensure consistency between tours and assignments. Then the reduction is
performed in essentially one step.

Here on the other hand we take the opposite approach, choosing simplicity over
optimization. We also start from MAX-E3-LIN2 but go through two intermediate
CSPs. The first step in our reduction gives a set of equations where each
variable appears at most five times (this property will come in handy in the
end when proving consistency between tours and assignments). In this step,
rather than introducing something new we rely heavily on machinery developed by
Berman and Karpinski to prove inapproximability for bounded occurrence CSPs
\cite{BK99,BK01,BK03}. As a second step we reduce to \inSAT. The motivation is
that the \textsc{1-in-3} predicate nicely corresponds to the objectives of TSP,
since we represent clauses by gadgets and the most economical solution will
visit all gadgets once but not more than once. Another way to view this step is
that we use \inSAT\ as an aid to design a TSP gadget for parity.  Finally, we
give a reduction from \inSAT\ to TSP.

  This approach is (at least arguably) simpler than the approach of
\cite{pap06}, since some of our arguments can be broken down into independent
pieces, arguing about the inapproximability of intermediate, specially
constructed CSPs. We also benefit from re-using out-of-the box the amplifier
construction of \cite{BK03}.  Interestingly, putting everything together we end
up obtaining a slightly better constant than the one currently known, implying
that there may still be some room for further improvement. Though we are still
a long way from an optimal inapproximability result, our results show that
there may still be hope for better bounds with existing tools.  Exploring how
far these techniques can take us with respect to TSP (and also its variants,
see for example \cite{karpinski}) may thus be an interesting question.  

The main result of this paper is given below and it follows directly from the
construction in section \ref{sec:construction} and Lemmata
\ref{lem:easy},\ref{lem:hard}.

\begin{theorem}

For all $\epsilon>0$ there is no polynomial-time
$(\frac{92.3}{91.8}-\epsilon)$-approximation algorithm for TSP, unless P=NP.

\end{theorem}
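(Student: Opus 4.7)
The plan is to prove the theorem by composing three gap-preserving reductions and tracking the additive losses end-to-end, starting from H\aa{}stad's tight $(1-\delta,\ 1/2+\delta)$ inapproximability of MAX-E3-LIN2 and ending at TSP. The two intermediate problems are a bounded-occurrence version of MAX-E3-LIN2, in which every variable participates in at most five equations, and \inSAT. The whole TSP construction is then packaged in Section~\ref{sec:construction}, with completeness stated as Lemma~\ref{lem:easy} and soundness as Lemma~\ref{lem:hard}.

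For the first step I would invoke the amplifier of Berman and Karpinski out of the box: for each original variable one builds an expander on its occurrences and adds equality constraints along the expander's edges, so that any assignment to the many occurrence-copies of a variable can be rounded to a consistent one at the cost of only a controlled number of additional violations. This yields an instance with variable-occurrence at most five and an explicit soundness-completeness gap whose numerical value Berman and Karpinski have already worked out. For the second step I would design a local gadget encoding each linear equation $x+y+z=c$ by a constant-size block of \inSAT\ clauses (with a few auxiliary variables), verifying that satisfying assignments of the gadget correspond bijectively to $\{0,1\}$-solutions of the equation and that each violated equation forces at least some fixed number of unsatisfied 1-in-3 clauses. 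The output is a hard gap for \inSAT\ in which every variable still appears only a bounded number of times.

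The third step is the heart of the proof. I would build an unweighted graph whose shortest-path metric serves as the TSP instance, consisting of a clause gadget for each 1-in-3 clause and a variable gadget linking the at most five literal-occurrences of each variable. The clause gadget is designed so that a cheapest visiting pattern enters it exactly once and in doing so selects exactly one of its three literals as the satisfying one, mirroring the semantics of the 1-in-3 predicate; the variable gadget forces a single Boolean value to be used at all of its occurrences. Calibrating the edge counts produces a base tour length $L$, realised precisely when every clause is 1-in-3-satisfied by a consistent assignment, together with a constant per-violation overhead. Lemma~\ref{lem:easy} (completeness) is then direct: a 1-in-3 satisfying assignment yields a tour of length exactly $L$ by following each gadget in its intended mode. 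Lemma~\ref{lem:hard} (soundness) is where the main obstacle lies: one must show that any tour of length at most $L+k$ can be converted into a Boolean assignment violating only $O(k)$ clauses. The delicate part is taming tours that cheat by traversing a variable gadget through an inconsistent combination of portals, or by shortcutting and revisiting clause gadgets; here the occurrence bound of five is exactly what keeps the variable gadget a constant-size object whose traversals can be enumerated, so that a ``best consistent rounding'' can be extracted from any short tour. Composing the three reductions and plugging in H\aa{}stad's numerical gap then yields the stated inapproximability ratio $92.3/91.8$.
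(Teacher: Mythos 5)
Your high-level architecture matches the paper exactly --- MAX-E3-LIN2, then a 5-occurrence system via the Berman--Karpinski amplifier, then \inSAT, then TSP --- but the proposal has genuine gaps precisely where the content of the theorem lies. The soundness direction (Lemma~\ref{lem:hard}) is asserted rather than argued: you say a tour of cost $L+k$ yields an assignment violating $O(k)$ clauses, but an unspecified constant inside that $O(\cdot)$ destroys the result, since the final ratio is computed as $(L+0.5m)/L$ with $L=91.8m$ and therefore requires the loss to be at most $k$ with constant exactly~$1$. The paper achieves this by a delicate accounting argument: a parity lemma showing each ``dishonest'' variable incurs an even number of doubled forced edges (Lemma~\ref{lem:even}), a credit scheme charging the extra weight of each doubled edge to the dishonest main variables it touches, and a randomized rounding that assigns each amplifier cloud a random but \emph{consistent} value so that size-two clauses between two dishonest variables are automatically satisfied. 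None of these ingredients appears in your sketch, and without them the ``best consistent rounding'' you invoke has no quantitative force.

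Moreover, two of your concrete design choices would not reproduce the stated constant. You propose an unweighted graph with its shortest-path metric, whereas the paper's construction depends on carefully tuned fractional weights ($7/4$ on terminal-to-$s$ edges, $3/2$ and $5/4$ on gadget edges) together with the forced-edge subdivision trick; these weights are chosen exactly so that the credits of $5/4$ and $3/4$ cover the expected number of clauses a dishonest variable can spoil among its five occurrences. You also posit a variable gadget that ``forces'' all occurrences of a variable to take one Boolean value; the paper deliberately does not (and cannot) force this --- consistency is only encouraged by the amplifier structure and recovered probabilistically in the soundness proof. Finally, the numerical bookkeeping ($15m$ clauses and $8.4m$ variables in $I_3$, $F+N+M=91.8m$, and the residual gap of $0.5m$ inherited from H\aa stad's theorem through the amplifier) is never carried out, and that computation is precisely what the stated bound $\frac{92.3}{91.8}$ amounts to.
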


\section{Preliminaries}

We will denote graphs by $G(V,E)$. All graphs are assumed to be undirected,
loop-less and edge-weighted, meaning that there is also a function $w: E \to
\mathbf{R}^+$.  In some cases we will allow $E$ to be a multi-set, that is, we
may allow parallel edges. In the case of a multi-set $E$ that contains several
copies of some elements, when we write $\sum_{e\in E} w(e)$ we mean the sum
that has one term for each copy. A (multi-)graph is Eulerian if there exists a
closed walk that visits all its vertices and uses each edge once. It is well
known that a (multi-)graph is Eulerian iff it is connected and all its vertices
have even degree. We will use $[n]$ to denote the set $\{1,2,\ldots,n\}$. We
will use $\E[X]$ to denote the expectation of a random variable $X$.

In the metric Traveling Salesman Problem (TSP) we are given as input an
edge-weighted undirected graph $G(V,E)$. Let $d(u,v)$, for $u,v\in V$ denote
the shortest-path distance from $u,v$. The objective is to find an ordering
$v_1,v_2,\ldots,v_n$ of the vertices such that $\sum_{i=1}^{n-1}d(v_i,v_{i+1})
+ d(v_n,v_1)$ is minimized.

Another, equivalent view of the TSP is the following: given an edge-weighted
graph $G(V,E)$ we seek to find a multi-set $E_T$ consisting of edges from $E$
such that the graph induced by $E_T$ spans $V$, is Eulerian and the sum of the
weights of all edges in $E_T$ is minimized. It is not hard to see that the two
formulations are equivalent. We will make use of this second formulation
because it makes some arguments on our construction easier.

We generalize the Eulerian multi-graph formulation as follows: a multi-set
$E_T$ of edges from $E$ is a quasi-tour iff the degrees of all vertices in the
multi-graph $G_T(V,E_T)$ are even. The cost of a quasi-tour is defined as
$\sum_{e\in E_T} w(e) + 2(c(G_T)-1)$, where $c(G_T)$ denotes the number of
connected components of the multi-graph. It is not hard to see that a TSP tour
can also be considered a quasi-tour with the same cost (since for a normal tour
$c(G_T)=1$), but in a weighted graph there could potentially be a quasi-tour
that is cheaper than the optimal tour.

\subsection{Forced edges}

As mentioned, we will view TSP as the problem of selecting edges from $E$ to
form a minimum-weight multi-set $E_T$ that makes the graph Eulerian. It is easy
to see that no edge will be selected more than twice, since if an edge is
selected three times we can remove two copies of it from $E_T$ and the graph
will still be Eulerian while we have improved the cost. 

In our construction we would like to be able to stipulate that some edges are
to be used at least once in any valid tour. We can achieve this with the
following trick: suppose that there is an edge $(u,v)$ with weight $w$ that we
want to force into every tour. We sub-divide this edge a large number of times,
say $p-1$, that is, we remove the edge and replace it with a path of $p$ edges
going through new vertices of degree two.  We then redistribute the original
edge's weight to the $p$ newly formed edges, so that each has weight $w/p$.
Now, any tour that fails to use two or more of the newly formed edges must be
disconnected.  Any tour that fails to use exactly one of them can be augmented
by adding two copies of the unused edge.  This only increases the cost by
$2w/p$, which can be made arbitrarily small by giving $p$ an appropriately
large value.  Therefore, we may assume without loss of generality that in our
construction we can force some edges to be used at least once. Note that these
arguments apply also to quasi-tours.

\section{Intermediate CSPs}

In this section we will design and prove inapproximability for a family of
instances of \inSAT\ with some special structure. We will use these instances
(and their structure) in the next section where we reduce from \inSAT\ to TSP.

Let $I_1$ be a system of $m$ linear equations mod 2, each consisting of exactly
three variables. Let $n$ be the total number of variables appearing in $I_1$
and let the variables be denoted as $x_i, i\in [n]$.  Let $B$ be the maximum
number of times any variable appears. We will make use of the following seminal
result due to Håstad:

\begin{theorem}[\cite{hastad}]

For all $\epsilon>0$ there exists a $B$ such that given an instance $I_1$ as
above it is NP-hard to decide if there is an assignment that satisfies at least
$(1-\epsilon)m$ equations or all assignment satisfy at most
$(\frac{1}{2}+\epsilon)m$ equations.

\end{theorem}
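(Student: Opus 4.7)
The plan is to invoke the PCP-theoretic machinery that Håstad developed: combine a strong inapproximability result for Label Cover with a long-code based three-query PCP, and then enforce bounded occurrence by a standard expander amplification. The starting point is a gap version of Label Cover with perfect completeness and soundness $2^{-k}$, obtained by applying Raz's parallel repetition theorem to the basic PCP of the PCP theorem; the parameter $k$ is tuned as a function of $\epsilon$.

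The heart of the reduction is the long-code three-query test. For each Label Cover vertex $v$ with label alphabet $[L]$, the prover supplies a folded function $f_v\colon\{-1,+1\}^L\to\{-1,+1\}$ supposedly equal to the dictator picking out $v$'s label. The verifier draws a Label Cover edge $(u,v)$ with projection $\pi$, samples uniform $x,y\in\{-1,+1\}^L$ and an independent noise vector $z$ whose coordinates are $+1$ with probability $1-\epsilon$, and queries $f_u(x)$, $f_v(y)$, and $f_v\bigl(x\cdot (y\circ\pi)\cdot z\bigr)$. The acceptance predicate is the parity of these three bits, so each invocation produces a three-variable equation mod $2$ in which the three proof locations play the role of variables. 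Completeness, i.e., that honest long codes pass with probability $1-\epsilon$, is a direct computation.

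The main obstacle is the soundness analysis. I would expand the bias of the acceptance predicate in the Fourier basis of $f_u$ and $f_v$; averaging over the noise $z$ attenuates each Fourier coefficient indexed by a set $T$ by a factor $(1-2\epsilon)^{|T|}$, so if the test accepts with probability bounded away from $1/2$ there must be large Fourier coefficients that correlate across the projection $\pi$. Sampling a coordinate from the support of such a coefficient and pushing it through $\pi$ yields a randomized Label Cover labeling that wins a fraction of edges much larger than $2^{-k}$, contradicting soundness once $k$ is chosen large enough in terms of $\epsilon$. This already gives the gap version of MAX-E3-LIN2. For the bound $B$, I would either observe that the number of verifier queries touching a fixed proof location is controlled by the Label Cover degree and the long-code alphabet (both functions of $\epsilon$, not of the instance), or apply a separate expander-based amplification that replaces each variable of occurrence $r$ by $r$ copies linked by equality constraints along a constant-degree expander on $[r]$, each equality rewritten as a three-variable mod-$2$ equation via an auxiliary variable. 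Expansion forces any near-optimal assignment to be nearly constant on each cluster, so the gap is preserved up to an arbitrarily small additive loss while every variable appears at most a constant number $B$ of times.
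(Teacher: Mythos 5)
This statement is quoted from H{\aa}stad's paper and is used here as a black box: the paper contains no proof of it, only the citation. Your proposal is therefore not an alternative to anything in the paper; it is an outline of H{\aa}stad's own argument (parallel-repeated Label Cover, the three-query long-code parity test with noise, Fourier-analytic soundness via decoding large coefficients through the projections). As an outline it identifies the right architecture, and the completeness computation and the role of the noise attenuation factor $(1-2\epsilon)^{|T|}$ are correctly placed. But the soundness analysis, which is the entire substance of the theorem, is asserted rather than carried out: ``there must be large Fourier coefficients that correlate across the projection'' is precisely the step that requires folding, the careful handling of the term $\hat{f_u}(\pi(T))\hat{f_v}(T)^2$, and the list-decoding argument. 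A proof attempt that stops there has not proved the theorem.

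There is also one concrete error. Your second proposed route to the bound $B$ --- replacing a variable of occurrence $r$ by $r$ copies tied together with equality constraints along a constant-degree expander --- does not yield the statement as written. Adding a constant fraction of always-satisfiable consistency equations dilutes the gap: in the soundness case the fraction of satisfiable equations rises strictly above $\frac{1}{2}+\epsilon$ of the new total, so the $(1-\epsilon)$ versus $(\frac{1}{2}+\epsilon)$ separation is lost. (This is exactly why the paper's own occurrence-reduction step, in the following subsection, settles for a $13m$ versus $12.5m$ gap rather than a factor close to $2$.) Only your first route is viable: one must observe that H{\aa}stad's construction itself, applied to a regular Label Cover instance, already queries each proof position a number of times bounded by a function of $\epsilon$ alone (the Label Cover degree times a quantity depending on the long-code alphabet), so $B=B(\epsilon)$ comes for free from the construction rather than from any post hoc amplification. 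You should commit to that route and delete the expander alternative.
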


\subsection{Bounded Occurences}

In $I_1$ each variable appears at most a constant number of times $B$, where
$B$ depends on $\epsilon$. We would like to reduce the maximum number of
occurrences of each variable to a small absolute constant. For this, one
typically uses some kind of expander or amplifier construction. Here we will
rely on a construction due to Berman and Karpinski that reduces the number of
occurrences to 5.

\begin{theorem}[\cite{BK03}] \label{thm:BK}

Consider the family of bipartite graphs $G(L,R,E)$, where $|L|=B, |R|=0.8B$,
all vertices of $L$ have degree 4,  all vertices of $R$ have degree 5 and $B$
is a sufficiently large multiple of 5.  If we select uniformly at random a
graph from this family then with high probability it has the following
property: for any $S\subseteq L\cup R$ such that $|S\cap L|\le \frac{|L|}{2}$
the number of edges in $E$ with exactly one endpoint in $S$ is at least $|S\cap
L|$.

\end{theorem}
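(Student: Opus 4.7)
\medskip
\noindent\textbf{Proof sketch.} I would apply the probabilistic method in the bipartite pairing (configuration) model: give each vertex of $L$ four half-edges and each vertex of $R$ five, then match the $4B$ $L$-side half-edges to the $4B$ $R$-side half-edges via a uniformly random bijection. This produces a random bipartite multigraph with the prescribed degrees. A standard computation shows that the expected number of multi-edges is a bounded constant, so the probability of drawing a simple graph is bounded away from zero; hence it suffices to prove the expansion property with high probability in the pairing model.

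Next, fix a candidate $S = S_L \cup S_R$ with $|S_L| = a \le B/2$ and $|S_R| = b$, and let $X$ denote the number of random edges having both endpoints in $S$. A degree count gives $4a + 5b = 2X + (\text{boundary edges})$, so the boundary contains exactly $4a + 5b - 2X$ edges. The expansion requirement ``boundary $\ge a$'' thus translates to $X \le (3a+5b)/2$, and the proof reduces to upper bounding, for every $(a,b,S_L,S_R)$, the probability that $X$ exceeds this threshold.

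In the pairing model each of the $4a$ half-edges leaving $S_L$ is matched uniformly among the $4B$ half-edges of $R$, so $\E[X] = 5ab/B$. Since $a \le B/2$, one has $\E[X] \le 5b/2$, which leaves a slack of at least $3a/2$ below the threshold $(3a+5b)/2$. A standard hypergeometric tail estimate of the form
\[
\Pr[X \ge t] \;\le\; \binom{4a}{t}\binom{5b}{t}\Big/\binom{4B}{t},
\]
combined with Stirling's formula, then yields a bound that decays exponentially in $a$ with a rate depending on $b/B$.

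The final step, and the main obstacle, is making the union bound $\sum_{a,b}\binom{B}{a}\binom{0.8B}{b}\Pr[X>(3a+5b)/2] = o(1)$. When $b = o(B)$ this is straightforward, since the tail is super-polynomially small. The delicate regime is when $b$ is a constant fraction of $0.8B$: there $\binom{0.8B}{b}$ is already exponentially large in $B$, and the specific degrees $(4,5)$ together with the density $|R|/|L| = 0.8$ appear to be tuned precisely so that the exponent of the tail bound dominates the entropy of the set choices. The hypothesis $a \le |L|/2$ enters exactly here, guaranteeing a constant multiplicative gap between the threshold and $\E[X]$. The resulting case analysis is routine but lengthy, and is performed in \cite{BK03}.
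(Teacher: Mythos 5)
First, a point of reference: the paper does not prove this statement at all --- it is imported verbatim from \cite{BK03} and used as a black box, so there is no in-paper argument to compare against. Judged on its own terms, your sketch sets up the right framework, and indeed essentially the framework of \cite{BK03} itself: the configuration model with contiguity to the uniform simple graph, the degree count $4a+5b=2X+\partial$ translating ``$\partial\ge a$'' into ``$X\le(3a+5b)/2$'', the expectation $\E[X]=5ab/B$, and the first-moment tail bound $\Pr[X\ge t]\le\binom{4a}{t}\binom{5b}{t}/\binom{4B}{t}$ are all correct.

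The genuine gap is that the proof stops exactly where the theorem begins. The entire content of the statement is that the specific constants $(4,5,0.8,\frac12)$ make the union bound $\sum_{a,b}\binom{B}{a}\binom{0.8B}{b}\Pr[X>(3a+5b)/2]$ vanish, and for that step you write that the ``routine but lengthy'' case analysis ``is performed in \cite{BK03}'' --- which is circular, since \cite{BK03} is the result being proved. Nothing in the sketch rules out the possibility that the entropy term wins; the slack of $3a/2$ over a mean of up to $5b/2$ is only a constant-factor deviation when $a,b=\Theta(B)$, so both the tail and the entropy are $e^{\Theta(B)}$ and the sign of the exponent is precisely what must be checked. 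A concrete observation that would make the union bound tractable, and which is absent from your sketch, is that $X\le\min(4a,5b)$ deterministically, so the event $X>(3a+5b)/2$ is \emph{impossible} unless $\frac{3}{5}a<b<a$; this confines the sum to $b=\Theta(a)$ and is where the hypotheses $|R|=0.8|L|$ and $|S\cap L|\le|L|/2$ do their real work. Without carrying out (or at least correctly delimiting) that estimate, the argument is a plausible plan rather than a proof.
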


We now use the above construction to construct a system of equations where each
variable appears exactly 5 times. First, we may assume that in $I_1$ the number
of appearances of each variable is a multiple of 5 (otherwise, repeat all
equations five times). Also, by repeating all the equations we can make sure
that all variables appear at least $B'$ times, where $B'$ is a sufficiently
large number to make Theorem \ref{thm:BK} hold. 

For each variable $x_i$ in $I_1$ we introduce the variables $x_{(i,j)},
j\in[d(i)]$ and $y_{(i,j)}, j\in[0.8d(i)]$ where $d(i)$ is the number of
appearances of $x_i$ in the original instance. We call $X_i = \{ x_{(i,j)}\ |\
j\in[d(i)] \} \cup \{y_{(i,j)}\ |\  j\in[0.8d(i)] \}$ the cloud that
corresponds to $x_i$. Construct a bipartite graph with the property described
in Theorem \ref{thm:BK} with $L=[d(i)], R=[0.8d(i)]$ (since $d(i)<B$ is a
constant that depends only on $\epsilon$ this can be done in constant time by
brute force).  For each edge $(j,k)\in E$ introduce the equation $x_{(i,j)} +
y_{(i,k)} = 1$. Finally, for each equation $x_{i_1} + x_{i_2} + x_{i_3} = b$ in
$I_1$, where this is the $j_1$-th appearance of $x_{i_1}$, the $j_2$-th
appearance of $x_{i_2}$ and the $j_3$-th appearance of $x_{i_3}$ replace it
with the equation $x_{(i_1,j_1)} + x_{(i_2,j_2)} + x_{(i_3,j_3)} = b$.

Denote this instance by $I_2$ and we have $|I_2|=13m$, with $12m$ equations
having size 2.  A consistent assignment to a cloud $X_i$ is an assignment that
sets all $x_{(i,j)}$ to $b$ and all $y_{(i,j)}$ to $1-b$.  By standard
arguments using the graph of Theorem \ref{thm:BK} we can show that an optimal
assignment to $I_2$ is consistent (in each inconsistent cloud let $S$ be the
vertices with the minority assignment; flipping all variables of $S$ cannot
make the solution worse). From this it follows that it is NP-hard to
distinguish if the maximum number of satisfiable equations is at least
$(13-\epsilon)m$ or at most $(12.5+\epsilon)m$.

\subsection{\inSAT}

In the \inSAT\ problem we are given a collection of clauses $(l_i \lor l_j \lor
l_k)$, each consisting of at most three literals, where each literal is either
a variable or its negation.  A clause is satisfied by a truth assighment if
exactly one of its literals is set to True.  The problem is to find an
assignment that satisfies the maximum number of clauses.

We would like to produce a \inSAT\ instance from $I_2$. Observe that it is easy
to turn the size two equations $x_{(i,j)} + y_{(i,k)} =1$ to the equivalent
clauses $(x_{(i,j)} \lor y_{(i,k)})$. We only need to worry about the $m$
equations of size three.

If the $k$-th size-three equation of $I_2$ is $x_{(i_1,j_1)} + x_{(i_2,j_2)} +
x_{(i_3,j_3)} = 1$ we introduce three new auxilliary variables $a_{(k,i)},
i\in[3]$ and replace the equation with the three clauses $(x_{(i_1,j_1)}\lor
a_{(k,1)}\lor a_{(k,2)})$, $(x_{(i_2,j_2)}\lor a_{(k,2)}\lor a_{(k,3)})$,
$(x_{(i_3,j_3)}\lor a_{(k,1)}\lor a_{(k,3)})$. If the right-hand-side of the
equation is $0$ then we add the same three clauses except we negate
$x_{(i_1,j_1)}$ in the first clause. We call these three clauses the cluster
that corresponds to the $k$-th equation. 

It is not hard to see that if we fix an assignment to $x_{(i_1,j_1)},
x_{(i_2,j_2)}, x_{(i_3,j_3)}$ that satisfies the $k$-th equation of $I_2$ then
there exists an assignment to $a_{(k,1)}, a_{(k,2)}, a_{(k,3)}$ that satisfies
the whole cluster. Otherwise, at most two of the clauses of the cluster can be
satisfied. Furthermore, in this case there exist three different assignments to
the auxilliary variables that satisfy two clauses and each leaves a different
clause unsatisfied.

  From now on, we will denote by $M$ the set of (main) variables $x_{(i,j)}$,
by $C$ the set of (checker) variables $y_{(i,j)}$ and by $A$ the set of
(auxilliary) variables $a_{(k,i)}$. Call the instance of \inSAT\ we have
constructed $I_3$. Note that it consists of $15m$ clauses and $8.4m$ variables.

\section{TSP}

\subsection{Construction} \label{sec:construction}

\begin{figure}

\centering

\includegraphics[scale=0.2]{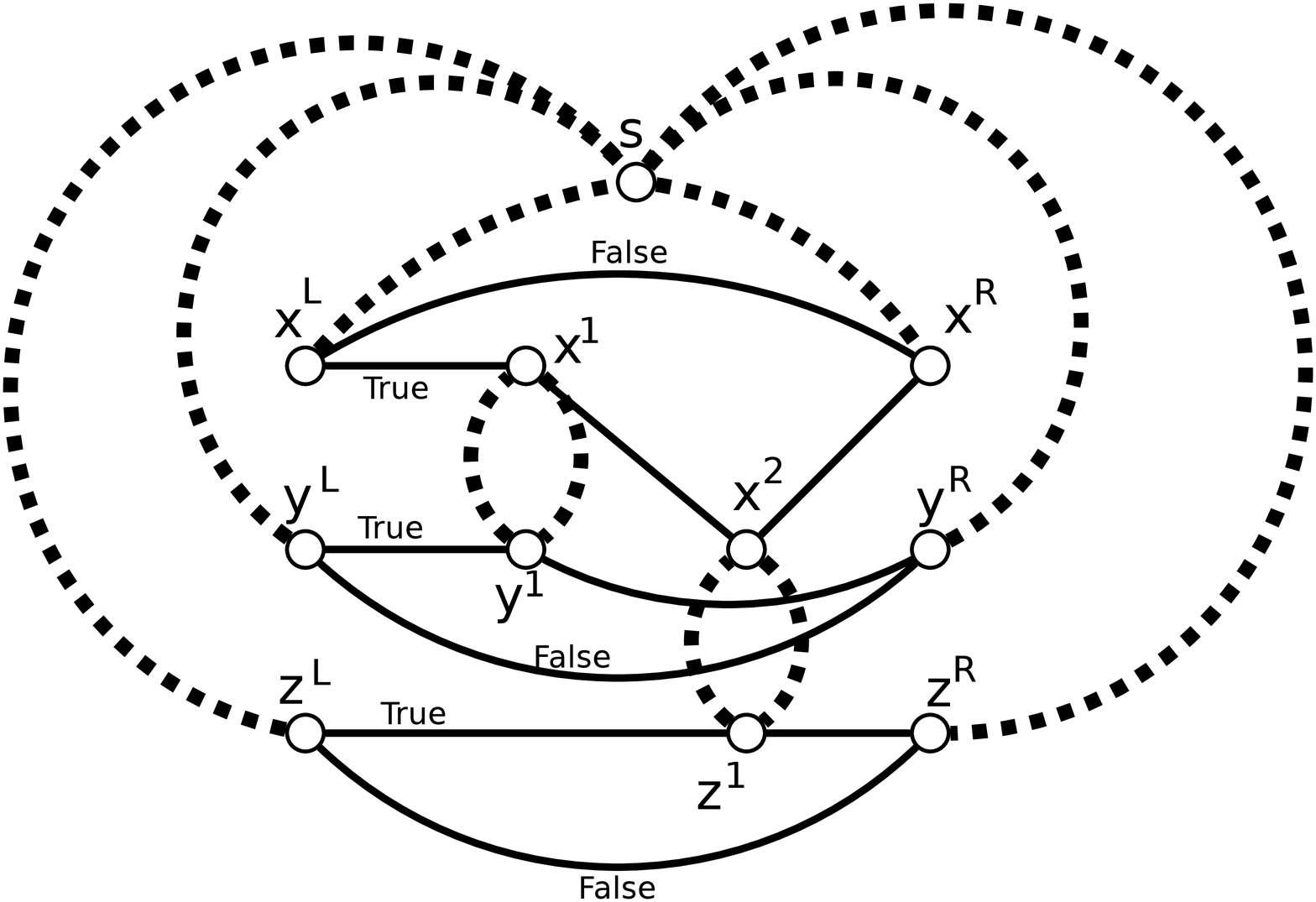}

\caption{Example construction for the clause $(x\lor y) \land (x\lor z)$.
Forced edges are denoted by dashed lines. There are two terminals for each
variable and two gadgets that represent the two clauses. The True edges
incident on the terminals are re-routed through the gadgets where each variable
appears positive. The False edges connect the terminals directly since no
variable appears anywhere negated.} \label{fig:basic}

\end{figure}

We now describe a construction that encodes $I_3$ into a TSP instance $G(V,E)$.
Rather than viewing this as a generic construction from \inSAT\ to TSP, we will
at times need to use facts that stem from the special structure of $I_3$. In
particular, the fact that variables can be partitioned into sets $M,C,A$, such
that variables in $M\cup C$ appear five times and variables in $A$ appear
twice; the fact that most clauses have size two and they involve one positive
variable from $M$ and one positive variable from $C$; and also the fact that
clauses of size three come in clusters as described in the construction of
$I_3$. 

As mentioned, we assume that in the graph $G(V,E)$ we may include some forced
edges, that is, edges that have to be used at least once in any tour.  The
graph includes a central vertex, which we will call $s$. For each variable in
$x\in M\cup C\cup A$ we introduce two new vertices named $x^L$ and $x^R$, which
we will call the left and right terminal associated with $x$. We add a forced
edge from each terminal to $s$. For terminals that correspond to variables in
$M\cup C$ this edge has weight $7/4$, while for variables in $A$ it has weight
$1/2$.  We also add two (parallel) non-forced edges between each pair of
terminals representing the same variable, each having a weight of 1 (we will
later break down at least one from each pair of these, so the graph we will
obtain in the end will be simple).  Informally, these two edges encode an
assignment to each variable: we arbitrarily label one the True edge and the
other the False edge, the idea being that a tour should pick exactly one of
these for each variable and that will give us an assignment. We will re-route
these edges through the clause gadgets as we introduce them, depending on
whether each variable appears in a clause positive or negative.

Now, we add some gadgets to encode the size-two clauses of $I_3$. Let
$(x_{(i,j_1)} \lor y_{(i,j_2)} )$ be a clause of $I_3$ and suppose that this is
the $k_1$-th clause that contains $x_{(i,j_1)}$ and the $k_2$-th clause that
contains $y_{(i,j_2)}$, $k_1,k_2\in[5]$. Then we add two new vertices to the
graph, call them $x_{(i,j_1)}^{k_1}$ and $y_{(i,j_2)}^{k_2}$. Add two forced
edges between them, each of weight $3/2$ (recall that forced edges represent
long paths, so these are not really parallel edges). Finally, re-route the True
edges incident on $x_{(i,j_1)}^L$ and $y_{(i,j_2)}^L$ through
$x_{(i,j_1)}^{k_1}$ and $y_{(i,j_2)}^{k_2}$ respectively. More precisely, if
the True edge incident on $x_{(i,j_1)}^L$ connects it to some other vertex $u$,
remove that edge from the graph and add an edge from $x_{(i,j_1)}^L$ to
$x_{(i,j_1)}^{k_1}$ and an edge from $x_{(i,j_1)}^{k_1}$ to $u$. All these
edges have weight one and are non-forced (see Figure \ref{fig:basic}).

We use a similar gadget for clauses of size three. Consider a cluster
$(x_{(i_1,j_1)}\lor a_{(k,1)}\lor a_{(k,2)})$, $(x_{(i_2,j_2)}\lor
a_{(k,2)}\lor a_{(k,3)})$, $(x_{(i_3,j_3)}\lor a_{(k,1)}\lor a_{(k,3)})$ and
suppose for simplicity that this is the fifth appearance for all the main
variables of the cluster. Then we add the new vertices $x_{(i_1,j_1)}^5,
x_{(i_2,j_2)}^5, x_{(i_3,j_3)}^5$ and also the vertices $a_{(k,1)}^{1},
a_{(k,1)}^{2}, a_{(k,2)}^{1}, a_{(k,2)}^{2}$ and $a_{(k,3)}^{1}, a_{(k,3)}^{2}
$. To encode the first clause we add two forced edges of weight $5/4$, one from
$x_{(i_1,j_1)}^5$ to $a_{(k,1)}^{1}$ and one from $x_{(i_1,j_1)}^5$ to
$a_{(k,2)}^{1}$. We also add a forced edge of weight 1 from   $a_{(k,1)}^{1}$
to $a_{(k,2)}^{1}$, thus making a triangle with the forced edges (see Figure
\ref{fig:basic2}). We re-route the True edge from $a_{(k,1)}^L$ through
$a_{(k,1)}^{1}$ and $a_{(k,1)}^{2}$. We do similarly for the other two
auxilliary variables and the main variables. Finally, for a cluster where
$x_{(i_1,j_1)}$ is negated, we use the same construction except that rather
than re-routing the True edge that is incident on $x_{(i_1,j_1)}^L$ we re-route
the False edge.  This completes the construction.

\begin{figure}

\centering

\includegraphics[scale=0.2]{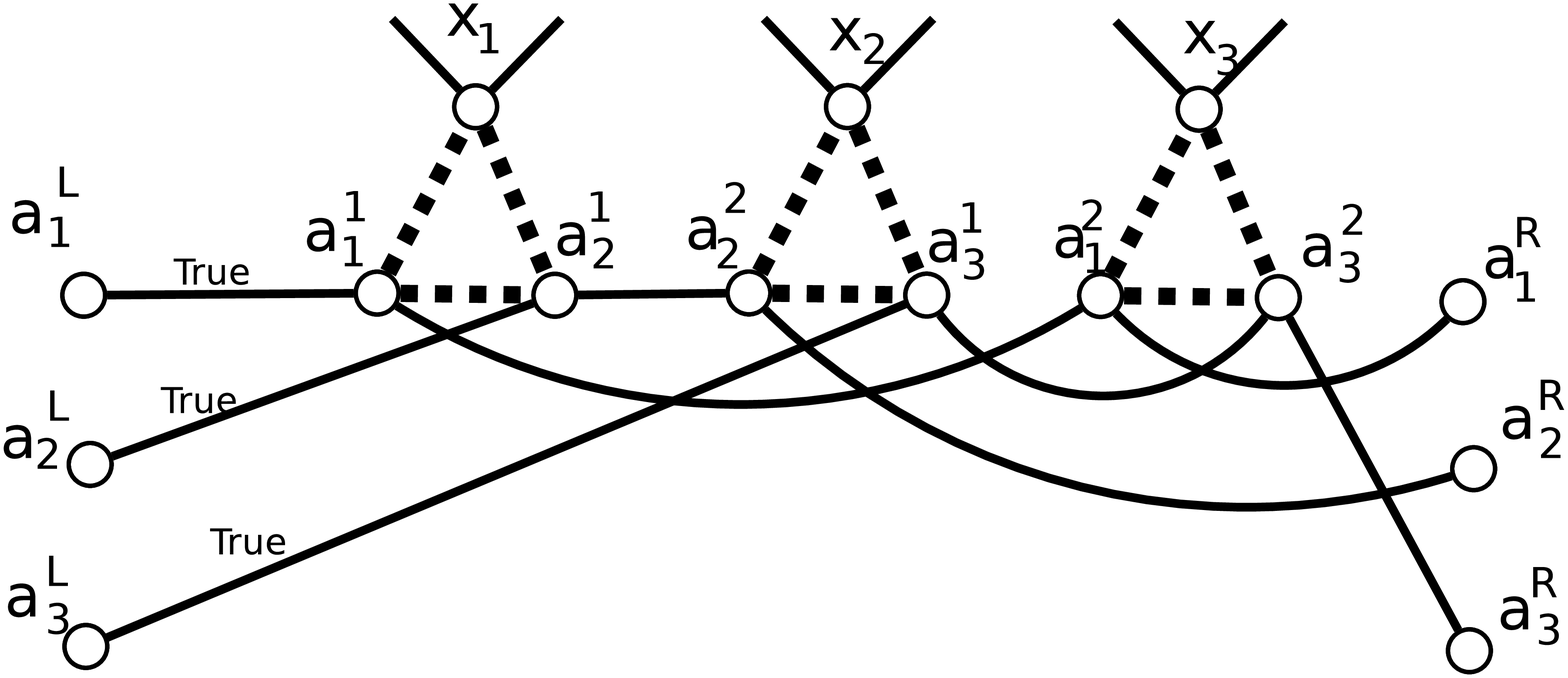}

\caption{Example construction fragment for the cluster $(x_1\lor a_1 \lor a_2)
\land (x_2 \lor a_2 \lor a_3) \land (x_3\lor a_1\lor a_3)$. The False edges
which connect each pair of terminals and the forced edges that connect
terminals to $s$ are not shown.} \label{fig:basic2}

\end{figure}

\subsection{From Assignment to Tour}

Let us now prove one direction of the reduction and in the process also give
some intuition about the construction. Call the graph we have constructed
$G(V,E)$.

\begin{lemma} \label{lem:easy}

If there exists an assignment to the variables of $I_3$ that leaves at most $k$
equations unsatisfied, then there is a tour of $G$ with cost at most $T=L+k$,
where $L=91.8m$.

\end{lemma}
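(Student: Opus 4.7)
The plan is to build a tour directly from the given assignment. First, I would preprocess: in each cluster of three size-three clauses, I would re-choose the auxiliary variables $a_{(k,1)},a_{(k,2)},a_{(k,3)}$ to locally maximize the number of satisfied clauses in that cluster. A short calculation on the cluster's wiring shows that the local optimum either satisfies all three clauses (when the underlying $I_2$ equation is satisfied) or satisfies exactly two and leaves one unsatisfied with 0 or 2 true literals, never 3. Hence the modified assignment $\sigma$ satisfies at least as many clauses as the original; let $k'\le k$ be its number of unsatisfied clauses, each of which has 0 or 2 true literals.

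I would then construct the tour's edge multiset $E_T$ as follows: (i) include one copy of every forced edge; (ii) for each variable $x$, include all edges of the True $x^L$-to-$x^R$ path if $\sigma(x)=1$ and all edges of the False path otherwise; (iii) for every unsatisfied clause with 0 true literals, include two copies of a single non-forced weight-$1$ edge incident on that clause's gadget, reconnecting the gadget to the rest of the graph. A vertex-by-vertex check shows that every vertex has even degree (the central vertex $s$ and all terminals are covered by the forced edges; each gadget vertex picks up $2$ forced incidences plus either $0$ or $2$ from the path passing through, parity being preserved by step (iii)), and connectivity holds because each gadget vertex either lies on a used True/False path or is restored to the main component by step (iii). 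Thus $E_T$ defines a valid TSP tour.

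The cost analysis then breaks into three pieces. The forced edges contribute $68.4m$: $21.9m$ from the terminal-to-$s$ edges ($3m\cdot 2\cdot 7/4$ from $M$, $2.4m\cdot 2\cdot 7/4$ from $C$, $3m\cdot 2\cdot 1/2$ from $A$), $36m$ from the $12m$ size-two clause gadgets, and $10.5m$ from the $m$ size-three cluster gadgets ($m\cdot 3\cdot (2\cdot 5/4+1)$). Each variable's chosen path has weight equal to one plus the number of its literal occurrences made true by $\sigma$, so summed over the $8.4m$ variables the path contribution is $T(\sigma)+8.4m$, where $T(\sigma)$ is the total number of true literal occurrences in $I_3$. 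Writing $n_0$ and $n_2$ for the numbers of unsatisfied clauses with 0 and 2 true literals ($n_0+n_2=k'$), the preprocessing gives $T(\sigma)=(15m-k')+2n_2$, and step (iii) adds $2n_0$. Summing, the cost is $68.4m + (15m-k')+2n_2 + 8.4m + 2n_0 = 91.8m + k' \le L+k$, as required.

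The key obstacle is the preprocessing step: a three-true-literal unsatisfied clause would contribute $+2$ to the cost while adding only $+1$ to $k$, breaking the bound. Showing that optimal local $a$-choices always avoid this case rests on the specific wiring of the three clauses in a cluster; a direct case check verifies that any configuration in which one clause is three-true while the other two are satisfied forces the underlying $I_2$ equation to be satisfied, which contradicts being at a 2-sat local optimum. This is the one place where the proof uses more than generic facts about 1-in-3-SAT.
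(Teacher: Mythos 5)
Your cost accounting is correct and arrives at the right total by a genuinely different route from the paper: instead of normalizing the assignment so that \emph{every} clause has a true literal (the paper makes each cloud consistent, which kills all unsatisfied size-two clauses, and flips auxiliary variables to eliminate all-false and all-true cluster clauses, so every gadget is traversed and connectivity is automatic), you keep the given assignment on $M\cup C$, only locally optimize the auxiliary variables, and pay for the remaining all-false clauses with doubled edges. Your identity ``path cost $=8.4m+T(\sigma)$'' together with $T(\sigma)=(15m-k')+2n_2$ is a clean replacement for the paper's endpoint-counting argument, and your case check that a locally optimal cluster never leaves a three-true clause unsatisfied is correct and is exactly the fact needed to make the $+1$-per-unsatisfied-clause bookkeeping work.

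The one genuine gap is the connectivity claim in step (iii). You assert that doubling a single non-forced unit edge incident on an all-false gadget ``reconnects the gadget to the rest of the graph,'' but this is only true if the other endpoint of the doubled edge lies in the main component, which is not automatic: for a gadget sitting in the middle of its variables' paths, all four incident non-forced edges lead to other gadget vertices, and those neighbouring gadgets may themselves be all-false and disconnected. Worse, if each dead gadget independently picks an arbitrary incident edge to double, the resulting out-degree-one structure can close into a cycle of dead gadgets that are joined to each other but still separated from $s$, leaving the parity correct but the tour disconnected (equivalently, leaving residual components whose $2(c-1)$ penalty you have not charged). The fix is not hard --- every unused variable path terminates at a terminal attached to $s$, so in the auxiliary graph whose vertices are the dead gadgets plus the main component every dead gadget has a path to the main component, and doubling the edges of a spanning tree of that graph uses exactly one doubled edge per dead gadget and restores connectivity --- but as written the step is asserted rather than proved, and the naive per-gadget choice can fail. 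Alternatively you could adopt the paper's normalization and avoid all-false clauses entirely, at the price of invoking the cloud-consistency argument for the size-two clauses.
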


\begin{proof}

Observe that by construction we may assume that all the unsatisfied clauses of
$I_3$ are in the clusters and that at most one clause in each cluster is
unsatisfied, otherwise we can obtain a better assignment. Also, if an
unsatisfied clause has all literals set to False we can flip the value of one
of the auxilliary variables without increasing the number of violated clauses.
Thus, we may assume that all clauses have a True literal. Also, we may assume
that no clause has all literals set to True: suppose that a clause does, then
both auxilliary variables of the clause are True. We set them both to False,
gaining one clause. If this causes the two other clauses of the cluster to
become unsatisfied, set the remaining auxilliary variable to True. We conclude
that all clauses have either one or two True literals.

Our tour uses all forced edges exactly once. For each variable $x$ set to True
in the assignment the tour selects the True edge incident on the terminal
corresponding to $x$. If the edge has been re-routed all its pieces are
selected, so that we have selected edges that make up a path from $x^L$ to
$x^R$.  Otherwise, if $x$ is set to False in the assignment the tour selects
the corresponding False path.

Observe that this is a valid quasi-tour because all vertices have even degree
(for each terminal we have selected the forced edge plus one more edge, for
gadget vertices we have selected the two forced edges and possibly the two
edges through which True or False was re-routed). Also, observe that the tour
must be connected, because each clause contains a True literal, therefore for
each gadget two of its external edges have been selected and they are part of a
path that leads to the terminals.

The cost of the tour is at most $F+N+M+k$, where $F$ is the total cost of all
forced edges in the graph and $N, M$ are the total number of variables and
clauses respectively in $I_3$. To see this, notice that there are $2N$
terminals, and there is one edge incident on each and there are $M$ clause
gadgets, $M-k$ of which have two selected edges incident on them and $k$ of
which have four. Summing up, this gives $2N+2M+2k$, but then each unit-weight
edge has been counted twice, meaning that the non-forced edges have a total
cost of $N+M+k$.

Finally, we have $N=8.4m$, $M=15m$ and $F=3\times 12m + \frac{7}{2}\times 3m +
\frac{7}{2}\times 5.4m + 1\times 3m = 68.4m$, where the terms are respectively
the cost of size-two clause gadgets, the cost of size-three clause gadgets, the
cost of edges connecting terminals to $s$ for the main variables and for the
auxilliary variables. We have $F+N+M = 91.8m$. \qed

\end{proof}

\subsection{From Tour to Assignment}

We would like now to prove the converse of Lemma \ref{lem:easy}, namely that if
a tour of cost $L+k$ exists then we can find an assignment that leaves at most
$k$ clauses unsatisfied. Let us first give some high-level intuition and in the
process justify the weights we have selected in our construction.

Informally, we could start from a simple base case: suppose that we have a tour
such that all edges of $G$ are used at most once. It is not hard to see that
this then corresponds to an assignment, as in the proof of Lemma
\ref{lem:easy}. So, the problem is how to avoid tours that may use some edges
twice. 

To this end, we first give some local improvement arguments that make sure that
the number of problematic edges, which are used twice, is limited.  However,
arguments like these can only take us so far, and we would like to avoid having
too much case analysis.

We therefore try to isolate the problem. For variables in $M\cup C$ which the
tour treats honestly, that is, variables which are not involved with edges used
twice, we directly obtain an assignment from the tour. For the other variables
in $M\cup C$ we pick a random value and then extend the whole assignment to $A$
in an optimal way. We want to show that the expected number of unsatisfied
clauses is at most $k$.

The first point here is that if a clause containing only honest variables turns
out to be violated, the tour must also be paying an extra cost for it. The
difficulty is therefore concentrated on clauses with dishonest variables.

By using some edges twice the tour is paying some cost on top of what is
accounted for in $L$. We would like to show that this extra cost is larger than
the number of clauses violated by the assignment. It is helpful to think here
that it is sufficient to show that the tour pays an additional cost of
$\frac{5}{2}$ for each dishonest variable, since main variables appear 5 times.

A crucial point now is that, by a simple parity argument, there has to be an
even number of violations (that is, edges used twice) for each variable (Lemma
\ref{lem:even}).  This explains the weights we have picked for the forced edges
in size-three gadgets ($\frac{5}{4}$) and for edges connecting terminals to $s$
($\frac{7}{4}=\frac{5}{4}+\frac{1}{2}$ or $\frac{5}{4}$ extra to the cost
already included in $L$ for fixing the parity of the terminal vertex). Two such
violations give enough extra cost to pay for the expected number of unsatisfied
clauses containing the variable.

At this point, we could also set the weights of forced edges in size-two
gadgets to $\frac{5}{2}$, which would be split among the two dishonest
variables giving $\frac{5}{4}$ to each. Then, any two violations would have
enough additional cost to pay for the expected unsatisfied clauses. However, we
are slightly more careful here: rather than setting all dishonest variables in
$M\cup C$ independently at random, we pick a random but consistent assignment
for each cloud. This ensures that all size-two clauses with violations will be
satisfied. Thus, it is sufficient for violations in them to have a cost of
$\frac{3}{2}$: the amount "paid" to each variable is now
$\frac{3}{4}=\frac{5}{4} - \frac{1}{2}$, but the expected number of unsatisfied
clauses with this variable is also decreased by $\frac{1}{2}$ since one clause
is surely satisfied.

\medskip

Let us now proceed to give the full details of the proof.  Recall that if a
tour of a certain cost exists, then there exists also a quasi-tour of the same
cost.  It suffices then to prove the following:

\begin{lemma} \label{lem:hard}

If there exists a quasi-tour of $G$ with cost at most $L+k$ then there exists
an assignment to the variables of $I_3$ that leaves at most $k$ clauses
unsatisfied.

\end{lemma}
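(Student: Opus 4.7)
The plan is to follow the roadmap sketched by the author just before the lemma, converting the hints about weights and randomization into a formal probabilistic argument. First I would normalize the quasi-tour $E_T$ to a convenient form. As already noted, no edge is used more than twice. Each forced edge is used at least once by construction. The central structural fact to establish (the Lemma \ref{lem:even} referred to in the intuition) is a parity statement: for every variable $x\in M\cup C\cup A$, the number of edges of the True path used twice plus the number of edges of the False path used twice has the same parity as the number of extra uses of the forced edge $s$--$x^L$ (and similarly for $x^R$), because the degree of each terminal in the multigraph $G_T$ must be even. In particular, any ``anomaly'' on a variable must come in pairs. I will then call a variable \emph{honest} if exactly one of its True or False paths is used, each edge exactly once, and all other incident non-forced edges are unused; otherwise \emph{dishonest}. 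By the parity observation, every dishonest variable has at least two ``extra'' uses of edges localized to its paths and incident forced edges, and I will make sure (by local swaps that never increase cost) that those extras are on edges of weight at least $1$, $5/4$, $3/2$ or $7/4$ depending on where they sit.

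Next I would define the assignment. For each honest variable, read its value off the tour (True if the True path is used, False otherwise). For each cloud $X_i$ that contains at least one dishonest variable, pick uniformly at random one of the two consistent assignments $b\in\{0,1\}$ of the whole cloud. Given the $M\cup C$ assignment, extend to $A$ greedily cluster-by-cluster, so that by the structural property of $I_3$ noted in Section~3.2 each cluster contributes at most one unsatisfied clause, and zero if the main literals already satisfy the underlying equation.

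Third, I would run the charging / expectation argument. Let $\Delta = \mathrm{cost}(E_T) - L$. I need to show $\E[\text{unsatisfied clauses}] \le \Delta$, which yields an assignment realizing this bound by averaging. Unsatisfied clauses come from two sources: (i) clauses all of whose $M\cup C$ variables are honest; (ii) clauses involving a dishonest main or checker variable, including the three clauses of any cluster whose main variables are dishonest. For (i), a size-two clause among honest variables can only be unsatisfied if its gadget sees no incoming True-path, which forces a doubled edge of weight at least $1$ inside the gadget, and each such doubling is charged to the single lost clause. For (ii), I would bound the contribution per dishonest variable $x$: a pair of violations localized at $x$ costs at least $5/2$ extra for a variable in $M\cup C$ (two edges of weight $5/4$ at $s$--terminal or size-three forced edges) and at least $5/2$ for an auxiliary variable too. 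Uniform random consistent cloud assignments guarantee that every size-two clause containing $x$ is satisfied, so the expected loss per size-two appearance drops to $0$, and the charge needed per size-three appearance is at most $1/2$ per main literal in expectation. Summing the five appearances of an $M\cup C$ variable and the expected cluster losses gives a total charge of at most $5/2$, matched by the excess paid. An analogous accounting for auxiliary variables is easier because they appear in only one cluster.

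The main obstacle will be step three, specifically the case analysis at each gadget type: enumerating exactly how a dishonest variable can manifest (paths used twice, paths skipped with forced edges traversed an extra time, mixed configurations inside a size-three triangle gadget), using Lemma \ref{lem:even} to group these into pairs, and verifying in each case that the local excess cost dominates the local expected number of lost clauses with slack at least $0$. The gadget weights $7/4$, $5/4$, $3/2$, $1/2$ were engineered precisely so that this inequality is tight; the delicate part is checking that re-routings between different gadgets do not let a single ``extra'' edge be double-counted against two different variables' budgets. Once this local bookkeeping is verified, summing over all variables gives $\E[\text{unsat}] \le \Delta \le k$, which completes the proof.
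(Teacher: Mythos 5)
Your overall strategy is the paper's: an honest/dishonest dichotomy driven by the parity lemma, a random consistent assignment per cloud for the dishonest variables of $M\cup C$, an optimal extension to $A$, and a local charging of expected unsatisfied clauses against the excess cost of the tour. However, several of the concrete claims on which your third step rests are false, and at least one would sink the argument as written. The minimum excess extracted from a pair of violations at a dishonest variable of $M\cup C$ is not $\frac{5}{2}$ but $\frac{3}{2}$: both violations can be doubled weight-$\frac{3}{2}$ forced edges inside size-two gadgets, each shared with another dishonest variable, so each variable can claim only $\frac{3}{4}$ per edge --- your own worry about double-counting is exactly on point here, and it bites. Symmetrically, your claim that the consistent cloud assignment satisfies \emph{every} size-two clause containing a dishonest $x$ is too strong: this holds only when the partner variable is also dishonest (hence also randomized within the same cloud); a clause pairing $x$ with an honest partner is satisfied only with probability $\frac{1}{2}$. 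The paper's credit scheme threads this needle precisely: each shared $\frac{3}{4}$ credit corresponds to one surely-satisfied clause, trading $\frac{1}{2}$ of expected loss for the missing $\frac{1}{2}$ of credit, so that each of the possible credit values $\frac{3}{2}$, $2$, $\frac{5}{2}$ suffices. You have both ingredients but have calibrated them inconsistently, so your inequality neither follows from your stated premises nor is recoverable without redoing this case analysis.

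The second genuine gap is the claimed ``analogous accounting for auxiliary variables,'' which does not exist. The forced edges from $s$ to auxiliary terminals have weight $\frac{1}{2}$, so after the degree-parity discount of $\frac{1}{2}$ they contribute \emph{zero} excess when doubled; the only other forced edges touching an auxiliary variable are the weight-$\frac{5}{4}$ edge to a main-variable vertex (whose full weight the main variable needs for its own budget) and the weight-$1$ edge to the other auxiliary vertex. A dishonest auxiliary variable therefore has no excess of its own to charge. The paper avoids charging auxiliary variables altogether: a normalization step (Lemma \ref{lem:cheat}) removes doublings of the aux--aux edges, a corollary of the parity lemma shows that a cluster whose main variables are all honest has honest auxiliary variables, and the extension to $A$ is chosen so that the one violated clause of a bad cluster always contains a dishonest \emph{main} variable. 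A smaller inaccuracy: your mechanism for unsatisfied clauses among honest variables (``a doubled edge of weight at least $1$ inside the gadget'') is not what happens, since non-forced unit edges are never doubled in a normalized quasi-tour; a gadget with zero true literals becomes its own connected component and is paid for by the $2(c(G_T)-1)$ term, while one with two or more true literals pays via extra incident unit edges. That last point is fixable bookkeeping, but the auxiliary-variable accounting and the $\frac{5}{2}$-versus-$\frac{3}{2}$ miscalibration are real gaps.
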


In order to prove Lemma \ref{lem:hard} it is helpful to first make some easy
observations. First, observe that if a quasi-tour uses a unit-weight edge twice
then we can remove both of these appearances of the edge from the solution
without increasing the cost, since the number of components can only increase
by one.  Therefore, all (non-forced) edges of weight one are used at most once.

Second, if both forced edges of a gadget of size two are used twice then we can
remove one appearance of each from the solution, decreasing the cost.
Similarly, in a gadget of size three if two forced edges are used twice then we
can drop one copy of each and use the third edge twice, making the tour
cheaper.  Therefore, in each gadget there is at most one forced edge that is
used twice.

Third,  if both forced edges that connect the terminals $x^L, x^R$ to $s$ are
used twice, then we can remove one appearance of each from the solution and
replace them by the shortest path from $x^L$ to $x^R$ that uses only non-forced
unit weight edges.  This has weight at most one for the auxilliary variables
and two for the rest, which in both cases is at most as much as the weight of
the removed edges. Therefore, for each variable $x$, at least one of the forced
edges that connect $x^L, x^R$ to $s$ is used exactly once.

Given a tour $E_T$, we will say that a variable $x$ is honestly traversed in
that tour if all the forced edges that involve it are used exactly once (this
includes the forced edges incident on $x^L, x^R$ and $x^i$, $i\in [5]$).

Let us now give two more useful facts.

\begin{lemma} \label{lem:cheat} 

There exists an optimal tour where all forced edges between two different
vertices that correspond to two variables in $A$ are used exactly once.

\end{lemma}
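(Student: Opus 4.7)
The plan is to begin with an optimal quasi-tour $E_T$ and to repair it so that every forced edge between two $A$-vertices is used exactly once. Since the observations just before the lemma already show that every forced edge is used once or twice, it suffices to show that whenever such a forced edge $e=(\alpha,\beta)$ is used twice, we can transform $E_T$ into a quasi-tour of the same (or smaller) cost in which $e$ is used once, and then iterate.

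\textbf{Local structure.} I would first pin down the local picture: $e$ is the weight-$1$ edge of the triangle of some size-three clause, together with the two weight-$5/4$ forced edges $(x^5,\alpha)$ and $(x^5,\beta)$. By the observation that at most one forced edge per size-three gadget is used twice, the latter two are used exactly once, so the forced-degree of each of $\alpha,\beta$ in the triangle equals $3$. Since each non-forced unit-weight edge is used at most once, and each of $\alpha,\beta$ is incident to exactly two non-forced edges (the two pieces of the True path of the corresponding auxiliary variable), parity forces exactly one of the two non-forced edges incident to $\alpha$ to be in $E_T$, and similarly for $\beta$.

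\textbf{The exchange.} The modification I would apply is to delete one copy of $e$ (saving cost $1$) and to repair the resulting parity problem at $\alpha,\beta$ by toggling the True/False edges of the two auxiliary variables $a_{(k,1)},a_{(k,2)}$ whose gadgets contain $\alpha,\beta$. These toggles create parity problems at the other endpoints (further gadget vertices or the terminals $a^L,a^R$), which I would propagate along the short True/False paths of the two auxiliary variables until the imbalance is absorbed by the weight-$1/2$ forced edges from the terminals to $s$, exactly as in the third preliminary observation. Since each auxiliary variable appears in only two clauses, the cascade has constant length. A short case analysis on the current usage of each edge (True-edge pieces, False edge, terminal-to-$s$ edges) then shows that the overall cost change is non-positive, that the even-degree property is preserved everywhere, and that the number of connected components does not increase.

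\textbf{Main obstacle.} The chief difficulty is the bookkeeping of the cascade: a single parity correction at $\alpha$ or $\beta$ rolls along the True/False path of an auxiliary variable, and one must verify in each case that the chain closes up without any hidden cost. The essential ingredient is that the terminal-to-$s$ edges for $A$-variables have weight $1/2$ rather than $7/4$; this slack is exactly what makes the parity repair affordable against the saving of $1$ obtained by cutting the double use of $e$, and it is the reason the construction assigned the smaller weight to the auxiliary terminals in the first place.
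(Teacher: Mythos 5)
Your local analysis of $\alpha=a_1^1$ and $\beta=a_2^1$ is correct (each has forced degree $3$, hence exactly one of its two True-path pieces is in the tour), and your exchange does go through in two of the three configurations the paper distinguishes. But the cascade you describe does not close in the hardest configuration. Suppose neither $(a_1^L,\alpha)$ nor $(a_2^L,\beta)$ is used (so the edge present at $\alpha$ is $(\alpha,a_1^2)$, and similarly at $\beta$), and at least one of $(s,a_1^L)$, $(s,a_2^L)$ is used only once. On each side your cascade has two options. Propagating inward --- remove $(\alpha,a_1^2)$, toggle $(a_1^2,a_1^R)$, absorb at $(s,a_1^R)$ --- costs up to $-1+1+\frac{1}{2}=+\frac{1}{2}$ on that side (this worst case occurs when $(a_1^2,a_1^R)$ is absent, which happens whenever a forced edge of the other gadget containing $a_1$ is doubled); doing this on both sides strips the triangle of every non-forced edge leaving it except possibly the path pieces at $x^5$, and if those are absent the component count rises by one, for a net change of $-1+\frac{1}{2}+\frac{1}{2}+2=+2$. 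Propagating outward --- add $(a_1^L,\alpha)$ and absorb at $(s,a_1^L)$ --- costs $+1+\frac{1}{2}=+\frac{3}{2}$ on a side whose $s$-edge is used once, already exceeding the unit saved by deleting the copy of $e$. So every variant of "push the parity defect along the variable's path and absorb it at the weight-$\frac{1}{2}$ edge" either raises the cost or disconnects the gadget in this case.

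The missing ingredient is the False edge, which is not a parity cascade at all but a cost-neutral reroute. In the configuration above, the terminal (say $a_1^L$) whose $s$-edge is used once has odd forced degree and does not use $(a_1^L,\alpha)$, so it \emph{must} be using the False edge $(a_1^L,a_1^R)$. The paper swaps the False edge and $(\alpha,a_1^2)$ for $(a_1^L,\alpha)$ and $(a_1^R,a_1^2)$ at zero cost and with no parity change, which places an edge between $\alpha$ and its terminal; it then removes $(a_1^L,\alpha)$ together with one copy of $e$ and adds $(a_2^L,\beta)$, $(s,a_2^L)$ and $(s,a_1^L)$, the added edge $(a_2^L,\beta)$ being precisely what keeps the triangle connected. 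Your intuition about the weight-$\frac{1}{2}$ edges is right as far as it goes ($\frac{1}{2}+\frac{1}{2}\le 1$ is what makes that final exchange free), but the accounting only balances after the False-edge reroute, which your sketch does not supply and which cannot be reached by a cascade confined to the True paths.
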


\begin{proof}

We refer the reader again to Figure \ref{fig:basic2}. Suppose for contradiction
that the edge $(a_1^1, a_2^1)$ is used twice (the other cases are equivalent by
symmetry since all vertices $a_i^j$ are connected to one terminal and one other
such vertex).

First, suppose that at least one of the edges that connect one of these two
endpoints to a terminal is selected, say the edge $(a_1^L, a_1^1)$. Then modify
the solution by removing that edge and a copy of the duplicate forced edge and
adding a copy of $(a_2^L,a_2^1)$, $(s,a_2^L)$ and $(s,a_1^L)$. This does not
increase the cost.

Second, suppose that both $(s,a_1^L)$ and $(s,a_2^L)$ are used twice in the
tour. Then we can modify the tour by dropping one copy of each and a copy of
the duplicate gadget edge and adding $(a_1^L,a_1^1)$ and $(a_2^L,a_2^1)$.

Finally, suppose that none of the previous two cases is true. Thus, neither of
$(a_1^L, a_1^1)$, $(a_2^L,a_2^1)$ is used in the tour. This means that
$(a_1^1,a_1^2)$ and $(a_2^1,a_2^2)$ are both used to ensure that $a_1^1, a_2^1$
have even degree. Also, one of the edges connecting a terminal to $s$ is used
once, say $(s,a_1^L)$. This means that the False edge incident to $a_1^L$ must
be used to make the degree of $a_1^L$ even. Remove the False edge and the edge
$(a_1^1,a_1^2)$ from the tour and add the edges $(a_1^L,a_1^1)$ and
$(a_1^R,a_1^2)$. This reduces to the first case. \qed

\end{proof}

\begin{lemma} \label{lem:even}

In an optimal tour, if a variable is dishonest then it must be dishonest twice.
More precisely, the number of forced edges that involve the variable (either
inside gadgets or connecting terminals to $s$) and are used twice must be even.  

\end{lemma}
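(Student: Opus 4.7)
My plan is to prove the lemma by a parity argument on a cut associated to each variable. For a variable $x$, let $S_x$ be the set consisting of the two terminals $x^L, x^R$ together with all gadget vertices $x^k$ introduced for the appearances of $x$. The key structural observation is that the only non-forced edges incident on $S_x$ are the edges of the (possibly re-routed) True and False paths of $x$, and by construction both paths start at $x^L$, end at $x^R$, and visit only vertices of $S_x$ in between. Consequently, every edge crossing the cut $\partial S_x$ is a forced edge involving $x$, and conversely every forced edge involving $x$ has exactly one endpoint in $S_x$: the two edges $(s,x^L),(s,x^R)$ connect $S_x$ to $s\notin S_x$, and the forced edges inside each clause gadget are incident on some $x^k$ but connect to vertices belonging to other variables' clouds (the partner checker/main vertex for a size-two gadget, and two auxiliary/main vertices of the triangle for a size-three gadget). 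Checking this for each gadget type is routine using the construction.

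I would then apply the standard parity fact that in any quasi-tour the total multiplicity of edges used across any cut is even (it equals the sum of degrees on one side mod $2$). Combined with the preliminary observations right before the lemma statement that every forced edge is used at least once and at most twice, this gives $N + t \equiv 0 \pmod 2$, where $N$ is the total number of forced edges involving $x$ and $t$ is the number of them used twice. A direct count yields $N = 2 + 2 \cdot 5 = 12$ for $x \in M \cup C$ (two edges to $s$, plus two forced edges per gadget across five appearances) and $N = 2 + 2 \cdot 2 = 6$ for $x \in A$ (two to $s$, plus two per gadget across two appearances); in both cases $N$ is even, so $t$ must also be even. Since ``dishonest'' means $t \ge 1$, the parity constraint forces $t \ge 2$, which is exactly the lemma.

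The only subtlety I anticipate is that the forced edges in the construction are actually implemented as long subdivided paths rather than genuine edges; however, at each internal degree-two vertex of such a path the parity of edge-usage is preserved, so every forced edge is used with a well-defined multiplicity in $\{1,2\}$ in the cut-counting sense, and the argument above then applies with the cut thought of after contracting each forced-edge path back to a single edge. Beyond this bookkeeping, the proof is essentially a one-line parity observation applied to a carefully chosen vertex set, which is the cleanest way to justify the weights $\tfrac{5}{4}$ and $\tfrac{7}{4}$ used in the construction.
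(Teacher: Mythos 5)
Your proof is correct and is essentially the paper's own argument: the paper likewise sums, over the set of vertices belonging to $x$, the parities forced by the even quasi-tour degrees, using exactly your two structural facts that every unit-weight edge incident on those vertices has both endpoints inside the set while every forced edge involving $x$ has exactly one endpoint there. Your cut formulation ($N+t\equiv 0 \pmod{2}$ with $N\in\{12,6\}$ even) is just the complementary side of that same handshake count, and if anything it states the general case more explicitly than the paper, which spells out only the case of a single doubled edge and appeals to ``a similar argument'' for the rest.
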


\begin{proof}

Consider a variable $x$ and first suppose that neither of the forced edges
connecting $s$ to the terminals is used twice, but there is a single forced
edge in a gadget that is used twice. It follows that the vertex that
corresponds to $x$ in that gadget has an odd number of unit-weight edges
incident to it selected.  The two terminals have a single selected unit-weight
edge incident on them and all other vertices that belong to $x$ have an even
number of incident unit-weight edges selected, since their total degree is
even. Thus, summing the number of selected unit-weight edges incident on all
the vertices that belong to $x$ we get an odd number, which is a contradiction
since we counted each such edge exactly twice. A similar argument applies if
one assumes that one of the forced edges incident on the terminals is used
twice and all other forced edges are used once. \qed

\end{proof}

Observe that it follows from Lemmata \ref{lem:cheat},\ref{lem:even} that if all
the main variables involved in a cluster are honest then the auxilliary
variables of that cluster are also honest. This holds because if the main
variables are honest then by Lemma \ref{lem:cheat} no forced edge inside the
gadgets of the cluster is used twice, so by Lemma \ref{lem:even} and the fact
that at least one of the forced edges incident on the terminals is used once,
the auxilliary variables are honest.

We would like now to be able to extract a good assignment even if a tour is not
honest, thus indirectly proving that honest tours are optimal. 

\begin{proof}[Lemma \ref{lem:hard}]

Consider the following algorithm to extract an assignment from the tour: first,
for each variable in $M\cup C$ that was traversed honestly give it the same
truth-value as in the tour, that is, if the tour selects the True edge incident
on the corresponding terminal, set the variable to True, otherwise to False. To
decide on the value of the dishonest variables from $M\cup C$ produce $n$
random bits $b_i, i\in [n]$ (recall that $n$ is the number of variables of
$I_1$, or the number of clouds in $I_2$). For each $i$ set all dishonest
variables $x_{(i,j)}$ to be equal to $b_i$ and all dishonest $y_{(i,j)}$ to be
equal to $1-b_i$. This ensures that size-two clauses that contain two dishonest
variables are always satisfied, since these clauses are always between two
variables of the same cloud.

Let us also assign the auxilliary variables. If there is an assignment to the
auxilliary variables of a cluster that satisfies all three clauses select it.
Otherwise, select an assignment that violates the clause of a dishonest
variable from $M$, if such a variable exists, and satisfies the other two. If
all main variables are honest, as we have argued the auxilliary variables are
also honest, so pick the corresponding assignment.

We now have a randomized assignment for $I_3$, so let us upper-bound the
expected number of unsatisfied clauses. Let $U$ be a random variable equal to
the set of unsatisfied clauses and let $U=U_1\cup U_2$ where $U_1$ contains all
the unsatisfied clauses that involve only honest variables from $M\cup C$ and
$U_2$ the rest. (Note that $U_1$ is not random.) 

The cost of the quasi-tour we have is $T \le F + N + M + k$. Let $E_G$ be the
set of forced gadget edges that the tour uses twice. Let $E_S$ be the set of
forced edges incident on $s$ that the tour uses twice. Let $E_1$ be the set of
unit-weight edges that the tour uses (recall that each is used once). Let
$U_1'$ be the set of clauses that correspond to gadgets the tour visits at
least twice (meaning they have at least four incident edges selected). Let
$U_1''$ be the set of clauses that correspond to gadgets the tour does not
visit (meaning that each forms its own connected component).

We have $T = \sum_{e\in E_T} w(e) + 2(c(G_T) -1) = F + \sum_{e\in E_1} w(e) +
\sum_{e\in E_G} w(e) + \sum_{e\in E_S} w(e) + 2(c(G_T) - 1)$.

By definition $\sum_{e\in E_1} w(e) = |E_1|$. Let us try to lower-bound this
quantity using arguments similar to the proof of Lemma \ref{lem:easy}. After
the selection of the forced edges there are $2N-|E_S|$ terminals with odd
degree, so each has a selected unit-weight edge incident to it. There are
$|U_1'|$ gadgets with at least four selected incident edges and
$M-|U_1'|-|U_1''|$ gadgets with two selected incident edges. Summing up we get
$2N-|E_S|+2M+2|U_1'|-2|U_1''|$, but each edge is counted twice, so we have
$|E_1|\ge N - \frac{1}{2}|E_S| + M + |U_1'| - |U_1''|$. 

Using this fact we get $T \ge F + N + M + \sum_{e\in E_G} w(e) + \sum_{e\in
E_S} ( w(e)-\frac{1}{2}) + |U_1'| + 2(c(G_T) - 1) - |U_1''|$. 

Now, observe that $|U_1''| \le c(G_T) - 1$, because each element of $U_1''$
forms a component and there is one component that is not an element of $U_1''$
(the one that contains $s$). Thus, $2(c(G_T)-1) - |U_1''| \ge |U_1''|$.
Combining this with the above we get $T \ge F + N + M + \sum_{e\in E_G} w(e) +
\sum_{e\in E_S} ( w(e)-\frac{1}{2}) + |U_1'| +  |U_1''|$. Given the known
upper-bound on the cost of the tour we have that $ k \ge \sum_{e\in E_G} w(e) +
\sum_{e\in E_S} ( w(e)-\frac{1}{2}) + |U_1'| + |U_1''|$.

We now need to argue two facts and we are done. First $|U_1| \le |U_1'| +
|U_1''|$. Recall that $U_1$ is the set of unsatisfied clauses that involve
honest variables. Since the variables are traversed honestly their
corresponding gadgets are either visited at least twice or not at all, so they
are counted in $|U_1'|$ or in $|U_1''|$.

Second, we would like to show that $\E[|U_2|] \le \sum_{e\in E_G} w(e) +
\sum_{e\in E_S} ( w(e)-\frac{1}{2})$. Before we do that, observe that if we
show this then it follows that $\E[|U|] = \E[|U_2|] + |U_1| \le k$, so there
must exist an assignment that leaves no more than $k$ clauses unsatisfied and
we are done.

So, let us try to upper-bound $\E[|U_2|]$, which is the expected number of
unsatisfied clauses that contain a dishonest variable. First, observe that if
there are dishonest auxilliary variables in a cluster by the construction of
the assignment we have ensured that any unsatisfied clause must contain a
dishonest main variable. Therefore, it suffices to count the expected number of
unsatisfied clauses that contain a dishonest main variable.

Let us define a credit $cr(x)$ for each dishonest main variable $x$. If a
forced edge connecting a terminal to $s$ is used twice we give $x$ a credit of
$5/4$ (which is equal to $w(e)-\frac{1}{2}$, since these edges have weight
$\frac{7}{4}$). If a forced edge in a gadget that involves $x$ and another main
variable is used twice we give $x$ a credit of $\frac{3}{4}$ (which is equal to
$w(e)/2$). Finally, if a forced edge in a gadget that involves $x$ and an
auxilliary variable is used twice we give $x$ a credit of $\frac{5}{4}$ (which
is equal to $w(e)$). We define $cr(x)$ to be the sum of credits given to $x$ in
this process.

If $D$ is the set of dishonest main variables then it is not hard to see that
$\sum_{x\in D} cr(x) \le \sum_{e\in E_G} w(e) + \sum_{e\in E_S} (
w(e)-\frac{1}{2})$. All edges are counted once in the sum of credits, except
for those from $E_G$ that involve two main variables, for which each is
credited half the weight.

We will now argue that the expected number of unsatisfied clauses that contain
a variable $x$ is at most $cr(x)$. Recall that clauses containing $x$ and
another dishonest main variable are by construction satisfied, while clauses
made up of $x$ and one honest variable are satisfied with probability $1/2$.
Also, clauses of size 3 that contain $x$ are satisfied with probability at
least $1/2$, since with probability $1/2$ the equation from which the cluster
was obtained is satisfied.  Thus, if $cr(x)\ge \frac{5}{2}$ we are done.  We
know that $x$ received at least two credits by Lemma \ref{lem:even}, so
$cr(x)\ge \frac{3}{2}$, as the smallest credit is $\frac{3}{4}$.  If $cr(x) =
\frac{3}{2}$ then $x$ must have received two credits that were shared with
other dishonest variables.  Therefore, there are two clauses containing $x$
which are surely satisfied, and out of the other three the expected number of
unsatisfied clauses is $\frac{3}{2}\le cr(x)$.  Similarly, if $cr(x)=2$, then
$x$ shared a credit with another variable at least once, so one clause is
surely satisfied and the expected number of unsatisfied clauses out of the
other four is $2$.

We therefore have $\E[|U_2|] \le \sum_{x\in D} cr(x) \le \sum_{e\in E_G} w(e) +
\sum_{e\in E_S} ( w(e)-\frac{1}{2})$ and this concludes the proof. \qed

\end{proof}

\section{Conclusions}

We have given an alternative and (we believe) simpler inapproximability proof
for TSP, also modestly improving the known bound. We believe that the approach
followed here where the hardness proof goes explicitly through bounded
occurrence CSPs is more promising than the somewhat ad-hoc method of
\cite{pap06}, not only because it is easier to understand but also because we
stand to gain almost "automatically" from improvements in our understanding of
the inapproximability of bounded occurrence CSPs. In particular, though we used
the 5-regular amplifiers from \cite{BK03}, any such amplifier would work
essentially "out of the box", and any improved construction could imply an
improvement in our bound. Nevertheless, the distance between the upper and
lower bounds on the approximability of TSP remains quite large and it seems
that some major new idea will be needed to close it.

\bibliographystyle{abbrv} \bibliography{tsp}

\end{document}